\newcommand{\bra}[1]{{\langle #1 \vert}}
\newcommand{\ket}[1]{{\vert #1 \rangle}}
\newcommand{\ave}[1]{{\langle #1\rangle}}
\newcommand{\ii}{ {\rm i} }
\newcommand{\dd}{ {\rm d} }
\newcommand{\ZZ}{\mathbb{Z}}
\newcommand{\CC}{\mathbb{C}}
\newcommand{\y}{{\rm y}}
\newcommand{\x}{{\rm x}}
\newcommand{\z}{{\rm z}}
\newcommand{\LL}{{\hat {\cal L}}}
\def\tr{{\,{\rm tr}\,}}
\newtheorem{theorem}{Theorem}[section]
\newenvironment{proof}[1][Proof]{\begin{trivlist}
\item[\hskip \labelsep {\bfseries #1}]}{\end{trivlist}}
\newcommand{\qed}{\nobreak \ifvmode \relax \else
      \ifdim\lastskip<1.5em \hskip-\lastskip
      \hskip1.5em plus0em minus0.5em \fi \nobreak
      \vrule height0.75em width0.5em depth0.25em\fi}
\begin{document}

\title[A note on symmetry reductions of the Lindblad equation]{A note on symmetry reductions of the Lindblad equation: transport in constrained open spin chains}
\author{Berislav Bu\v ca and Toma\v{z} Prosen}

\address{Department of physics, Faculty of Mathematics and Physics, University of Ljubljana,
Jadranska 19, SI-1000 Ljubljana, Slovenia}

\date{\today}


\begin{abstract}
We study quantum transport properties of an open Heisenberg XXZ spin 1/2 chain driven by a pair of Lindblad jump operators satisfying a global `microcanonical' constraint, i.e. conserving the total magnetization. We will show that this system has an additional discrete symmetry which is particular to the Liouvillean description of the problem.
Such symmetry reduces the dynamics even more than what would be expected in the standard Hilbert space formalism and establishes existence of multiple steady states. Interestingly, numerical simulations of the XXZ model suggest that a pair of distinct non-equilibrium steady states becomes indistinguishable in the thermodynamic limit, and exhibit sub-diffusive spin transport in the easy-axis regime of anisotropy $\Delta > 1$.
\end{abstract}

\pacs{03.65.Yz, 03.65.Fd, 05.30.Fk, 75.10.Pq}

\maketitle

\section{Introduction}

Simulating interacting quantum many-body systems in-, and in particular, out-of- equilibrium is at the forefront of the current experimental and theoretical research (see e.g. \cite{Bloch}, \cite{Polkovnikov}).
While it has been recognized that treating large isolated quantum-many body systems (with generic physical properties) poses a formidable task to any method of theoretical description or simulation, it has soon become apparent -- first within the community of quantum optics \cite{Zoller} -- that often a macroscopic number of observables can be treated as quantum noise, and that the behavior of a few
essential physical observables can be extracted in terms of the formalism of open quantum systems \cite{Petruccione,Alicki}. From mathematical-physics point of view most notable is perhaps the
discovery of Lindblad, Gorini, Kossakowski and Sudarshan \cite{Lindblad,Gorini}, that within the Markovian approximation any time-development of an open system's (reduced) density operator $\rho(t)$, acting on a 
$N-$dimensional Hilbert space ${\cal H}$ describing the (slow) degrees of freedom of interest, can be described by the quantum Liouville equation in the form
\begin{equation}
\dot\rho(t)= \LL \rho(t) \equiv -\ii 
[H,\rho(t)] +\sum_{m = 1}^{N^2-1}\Bigl(L_m \rho(t) L_m^\dagger -\frac{1}{2}\{ L_m^\dagger L_m,\rho(t)\}\Bigr) \label{lind}
\end{equation}
where $[A,B]\equiv AB-BA$, $\{A,B\}\equiv  AB+BA$, $H$ is the Hermitian Hamiltonian operator, and $L_m$ is a set of at most $N^2-1$ so-called Lindblad operators, all over ${\cal H}$. We use units in which $\hbar=1$.
$\LL$ can be understood as a (super)operator acting on the space ${\cal B}(\cal H)$ of bounded operators over Hilbert space $\cal H$. 
The space ${\cal B}({\cal H})$ shall also be considered as a Hilbert space itself, equipped with the Hilbert-Schmidt inner product.
The Liouvillian flow (\ref{lind}) is the most general one which is (i)
local-in-time, (ii) respects the positivity $\rho(t) \ge 0$, and (iii) conserves the trace $\tr \rho(t) \equiv 1$, and forms the so-called quantum dynamical semi-group. However, the approach of open-quantum-system has only quite recently been applied to the non-equilibrium problems of condensed matter physics  \cite{Saito:03,Michel:03,Carlos,Steinigeweg:06,Wichterich:07,NJP:08,Michel:08,Yan:08}, such as the controversial quantum transport problem in one-dimension \cite{ZotosReview,Affleck,Fabian}.

The central object of concern in physics applications is the fixed point of dynamical semigroup, $\rho_{\infty} = \lim_{t\to\infty}\rho(t)$, or the null-vector of Liouvillian
\begin{equation}
\LL \rho_{\infty} = 0
\end{equation}
and which shall be  termed {\em non-equilibrium steady state} (NESS). This state is relevant in the asymptotic long time limit of the evolution of the system, and does not decay despite the dissipative nature of the Lindblad equation. The interesting case, when the fixed point is a pure state $\rho_\infty = \ket{\psi_\infty}\bra{\psi_\infty}$ is of particular appeal in quantum optics and quantum information theory \cite{Diehl}, where such a state $\ket{\psi_\infty}$ is called a {\em dark state}. Sufficient condition for existence of a dark state is that $\ket{\psi_\infty}$ is an eigenstate of $H$ and is annihilated by all jump operators, $L_m \ket{\psi_\infty}=0$. Dark states are also known as examples of decoherence-free states, and are of importance in quantum computing \cite{lidhar}. 

We note that the Lindblad equation (\ref{lind}) can in general describe the system coupled to {\em several} thermal (or chemical, magnetic) baths at different values of the
thermodynamic potentials, hence the resulting asymptotic states $\rho_\infty$ can be considered as intrinsically {\em nonequilibrium}.

Of central importance is the understanding and control of uniqueness of NESS, for example for applications in quantum memories \cite{QM} which could be realized in terms of possibly degenerate (non-unique) NESSs. Another potentially very iteresting application of non-unique (multiple) steady states of Liouvillean quantum dynamics could be in detecting non-equilibrium quantum phase transitions, or regions of criticality \cite{eisertprosen}. Importantly, multiple fixed points of dynamical semigroups are also a crucial part of the concept of decoherence-free subspaces \cite{lidhar}, one of the key ideas for fighting errors in quantum computation.
Precise conditions for the uniqueness of NESS, i.e. independence of $\rho_\infty$ of the initial condition $\rho(0)$,  has been established a while ago \cite{Evans,Spohn}. Essentially, it follows from the Evans theorem \cite{Evans} that NESS is unique if and only if the set of operators $\{H,L_m,m=1,2,\ldots \}$ generates the entire multiplicative operator algebra ${\cal B}({\cal H})$. One the other hand, the situations where
NESS is not unique have been systematically much less explored. For example, in the case of existence of a discrete symmetry, of either (i) Liouvillian $\LL$ as a whole, or (ii) of all members of the set 
$\{H,L_m,m=1,2,\ldots\}$ one might expect non-unique NESSs classified by the eigenvalues of the symmetry operation.

 In this paper we will analyze how the existence of symmetries can allow us to treat the transport problem in a boundary driven open anisotropic Heisenberg XXZ spin 1/2 chain with a micro-canonical constraint, i.e. enforcing the exact conservation of total spin (magnetization). Quite interestingly, our numerical results suggest that the two NESSs corresponding to the zero magnetization sector characterized by the eigenvalue of the parity-like symmetry become indistinguishable in the thermodynamic limit, 
and that in the easy-axis regime (of anisotropy $\Delta > 1$) the spin transport becomes sub-diffusive (insulating in the thermodynamic limit), quite different from the results \cite{pz09,RobinPRE} for the un-constrained boundary-driven XXZ chain. This seems consistent with a recently claimed co-existence of diffusive and insulating transport in a gapped XXZ model \cite{IJS}.
In the Appendix we provide a general mathematical prescription how in both cases (i,ii) of the previous paragraph the symmetry can be facilitated to reduce the problem and block-diagonalize the matrix of the Liouvillian superoperator, whereas 
{\em only} in the second case (ii) the non-uniqueness of NESS is guaranteed by the existence of a symmetry operator and distinct Liouvillian fixed points can be labelled by the eigenvalues of the symmetry operation.

\section{Symmetric boundary driven open XXZ spin chains}

\label{example}

In order to study a simple system which does not have a unique NESS and to provide some additional insight into the debate on quantum transport in one dimension \cite{ZotosReview,IJS}, we shall discuss a finite open anisotropic Heisenberg XXZ spin 1/2 chain on $n$ sites, with the Hamiltonian
\begin{equation}
H=\sum_{i=1}^{n-1} \left [ \sigma^\x_i\sigma^\x_{i+1}+\sigma_i^\y\sigma_{i+1}^\y + \Delta \sigma_i^\z\sigma^z_{i+1} \right ].
\label{Ham}
\end{equation}
$\sigma^{\x,\y,\z}_i$ designate a set of standard Pauli matrices corresponding to physical sites $i=1,2,\ldots,n$ and acting on a Hilbert space ${\cal H}=(\CC^2)^{\otimes n}$ of dimension $N=2^n$.
We couple the chain to magnetic reservoirs only through the boundary spins $i=1$ and $i=n$. 

Two cases can be considered:
\begin{enumerate}
\item Firstly, we take a set of four Lindblad operators
\begin{eqnarray} 
L^{\rm w}_1 &=& \sqrt{\Gamma(1-\mu)} \sigma^+_1, \qquad L^{\rm w}_3 = \sqrt{\Gamma(1+\mu)}  \sigma^+_n,  \label{XXZlindweak} \\
L^{\rm w}_2 &=& \sqrt{\Gamma(1+\mu)} \sigma^-_1, \qquad L^{\rm w}_4 = \sqrt{\Gamma(1-\mu)} \sigma^-_n . \nonumber
\end{eqnarray}
where two real parameters $\Gamma > 0$ and $\mu \in [0,1]$ designate, respectively, the strength of coupling to a pair of magnetic reservoirs and strength of (non-equilibrium) driving.
$\sigma^\pm_i = (\sigma^\x_i \pm \ii \sigma^\y_i)/2$ are the spin-flip operators.
\item Secondly, we consider a pair of Lindblad operators
\begin{equation}
L^{\rm s}_1 = \Gamma (1-\mu)\sigma^+_1 \sigma^-_n, \qquad
L^{\rm s}_2 = \Gamma (1+\mu)\sigma^-_1 \sigma^+_n, \label{XXZlindstrong}
\end{equation}
again parametrized by two coupling parameters $\Gamma,\mu$ having the same interpretation as above.
We note that the Liouvillian flow (\ref{lind}) equipped with (\ref{XXZlindstrong}) strictly conserves the total magnetization $M = \sum_{i=1}^n \sigma^\z_i$, at the expense of non-local coupling between the first and the last site. In fact, we can imagine Lindblad operators here as incoherent quantum jumps which transfer spin excitations between $i=1$ and $i=n$ sites while conserving total number of spin-excitations (quasi-particles). Alternatively, such a model can be interpreted as a spin ring where hopping is fully coherent on bonds $(1,2),(2,3)\ldots,(n-1,n)$ while it is fully dissipative (and asymmetric for $\mu\neq 0$) on one bond $(n,1)$, and is a particular case
of quantum exclusion process complementing the one presented in Ref.~\cite{Eisler} (see also \cite{Temme,markoXXd}).
\end{enumerate}

We note that the quantum transport models described by the Hamiltonian (\ref{Ham}) and (\ref{XXZlindweak}), or (\ref{XXZlindstrong}), can be considered as out of equilibrium steady state models with minimal incoherent input, namely with incoherent processes taking place only on the boundary, or on the single bond.

Let $P$ be a permutation operator which exchanges the site $i$ with $n-i+1$ for all $i$. In fact, $P$ can be written as a unitary operator over ${\cal H}$ with explicit representation in the computational basis 
(eigenbasis of $\sigma^\z_i$, $\sigma^\z_i \ket{ a_1,a_2 \ldots a_n} = (1-2 a_i) \ket{ a_1,a_2 \ldots a_n}$ for $a_i \in \{0,1\}$) as
\begin{equation}
P = \sum_{(a_1,a_2,\ldots,a_n)\in \{0,1\}^n} \ket{a_1,a_2,\ldots,a_n}\bra{a_n,a_{n-1},\ldots,a_1}
\end{equation}
and can be interpreted as a {\em parity operator}.
Combining it with spin flips on all sites we define another unitary (parity-like) operator $S$ as
\begin{equation}
S = P \prod_{i=1}^n \sigma^\x_i.
\end{equation}
Clearly, $S$ is a $\ZZ_2$ symmetry of the Hamiltonian, as $[H,S]=0$ and $S^2=1$.

In case (i) with (\ref{XXZlindweak}) one notes that $S L^{\rm w}_1 S^\dagger \equiv \hat{S}  L^{\rm w}_1 = L^{\rm w}_4$ and $\hat{S} L^{\rm w}_2 = L^{\rm w}_3$ hence the Liouvillian flow (\ref{lind}) exhibits a {\em weak} symmetry in the sense of Appendix, Eq. (\ref{weak}). Precisely the same case of symmetrically boundary driven XXZ chain (\ref{Ham},\ref{XXZlindweak}) has been discussed in several recent papers \cite{itaslo,itaslo2,pz09,PZ2010,ProsenPRL,markoPRL} and has been shown to exhibit interesting
non-equilibrium trasport properties. In this case, one can use Evans theorem \cite{Evans} in order to prove that NESS is unique.
The number of invariant subspaces is $n_S = 2$, with symmetry $S$ eigenvalues $s_1 = +1$ and $s_2 = -1$. Let the complete set of 4 symmetry adapted operator sub-spaces be now suggestively denoted as
$\{ s_\alpha,s_\beta\} \equiv {\cal B}_{\alpha,\beta}$. As pointed out in the Appendix (also explaining the notation), the symmetry $\hat{S}$ can still be used in
order to reduce the Liouvillian to two blocks ${\cal B}_1 = \{+1,+1\} \oplus \{-1,-1\}$, and ${\cal B}_2 = \{+1,-1\} \oplus \{-1,+1\}$, which have roughly similar dimensions for large $n$.

Let us now focus on case (ii). $S$ now becomes a {\em strong} symmetry (Appendix, Eq. (\ref{strong})) of the flow (\ref{lind},\ref{XXZlindstrong}) as $\hat{S} L^{\rm s}_m = L^{\rm s}_m$ for $m=1,2$. In this case, however, we have an additional
strong (continuous, $U(1)$) symmetry, generated by magnetization operator $S_M = e^{\ii \varphi M}$, as also $\hat{S}_M L^{\rm s}_m = L^{\rm s}_m$ and $\hat{S}_M H = H$.
We shall say that the flow (\ref{lind},\ref{XXZlindstrong}) obeys a {\em micro-canonical constraint}.
Thus we can use the Theorem \ref{theorem} of Appendix in order to prove the existence of a pair of distinct NESSs for each eigenvalue of total magnetization $s_{\z} \in \{-n,-n+2,\ldots,n-2,n\}$, i.e. we have a $2 N+1$ dimensional 
convex set of fixed points of the Liouvillian flow.\footnote{The dimension of the convex set of NESS is by one smaller than the number of distinct symmetry labelled fixed points $2 (N-1)$ due to the trace constraint $\tr \rho=1$.} As from the quantum transport point of view the most interesting is the zero magnetization sector, we will in the following fix $s_{\z}=0$ and assume the size $n$ to be even.
There we have a line of degenerate NESS parametrized by a number from a unit interval $u \in [0,1]$
\begin{equation}
\rho_{\infty}(u) = u \rho^{1}_{\infty} + (1-u)\rho^{2}_{\infty}, \quad
\rho^{1}_{\infty} \in \{ +1,+1\},\; \rho^{2}_{\infty} \in \{ -1,-1\}.
\end{equation}

\subsection{Numerical results and transport properties}

Let us now turn to the physical properties of our microcanonically constrained open XXZ chain, in particular to the steady state spin current and magnetization profiles. 
The spin current operator corresponding to the bond $(i,i+1)$ is defined as, 
\begin{equation}
j_i= \sigma^\x_i\sigma^\y_{i+1}-\sigma_i^\y\sigma_{i+1}^\x = 2\ii (\sigma^-_i \sigma^+_{i+1}-\sigma^+_i \sigma^-_{i+1})
\end{equation}
and obeys the operator continuity equation of the form $\frac{\dd}{\dd t} (\sigma^\z_j/2) = \ii [H,\sigma^\z/2] = j_i - j_{i-1}$, for $i=2,3\ldots,n-1$. Steady state expectation values shall be denoted as $J=\ave{j_i}$ (which does not depend on site index $i$ due to continuity equation), and $M_i = \ave{\sigma^\z_i}$, where $\ave{\bullet} \equiv \tr (\bullet ) \rho_\infty$.  
We have checked that the fixed point problem for Liouvillian flow (\ref{lind},\ref{XXZlindstrong}) does not admit a closed form solution of a matrix product operator form of small finite rank as in the weakly-symmetric case \cite{ProsenPRL}, thus we had to resort to numerical simulations. Of course, we optimized our calculations by restricting the Liouvillean to diagonal blocks $\LL|_{\{+1,+1\}}$ (for $\rho^1_{\infty}$) and $\LL|_{\{-1,-1\}}$ (for $\rho^{2}_{\infty}$). For $n=4,6,8$ and $10$ sites we used exact numerical diagonalization, while for $n=12,14$ and $16$ we used a wave-function Monte-Carlo approach called the method of quantum trajectories, as outlined in Appendix A of Ref.~\cite{itaslo2} (see also Ref.~\cite{Michel:08}). An efficient Trotter expansion of the propagator $e^{\ii H \delta_t}$ with complex coefficients \cite{Pizorn} has been employed. 
The stochastic time-dependent Shr\" odinger equation has been simulated until the current was equal on all bonds, and the statistical error was small, both within the accuracy better than $1\%$.

\begin{figure}[!]
\begin{center}
\includegraphics[width=0.9\textwidth]{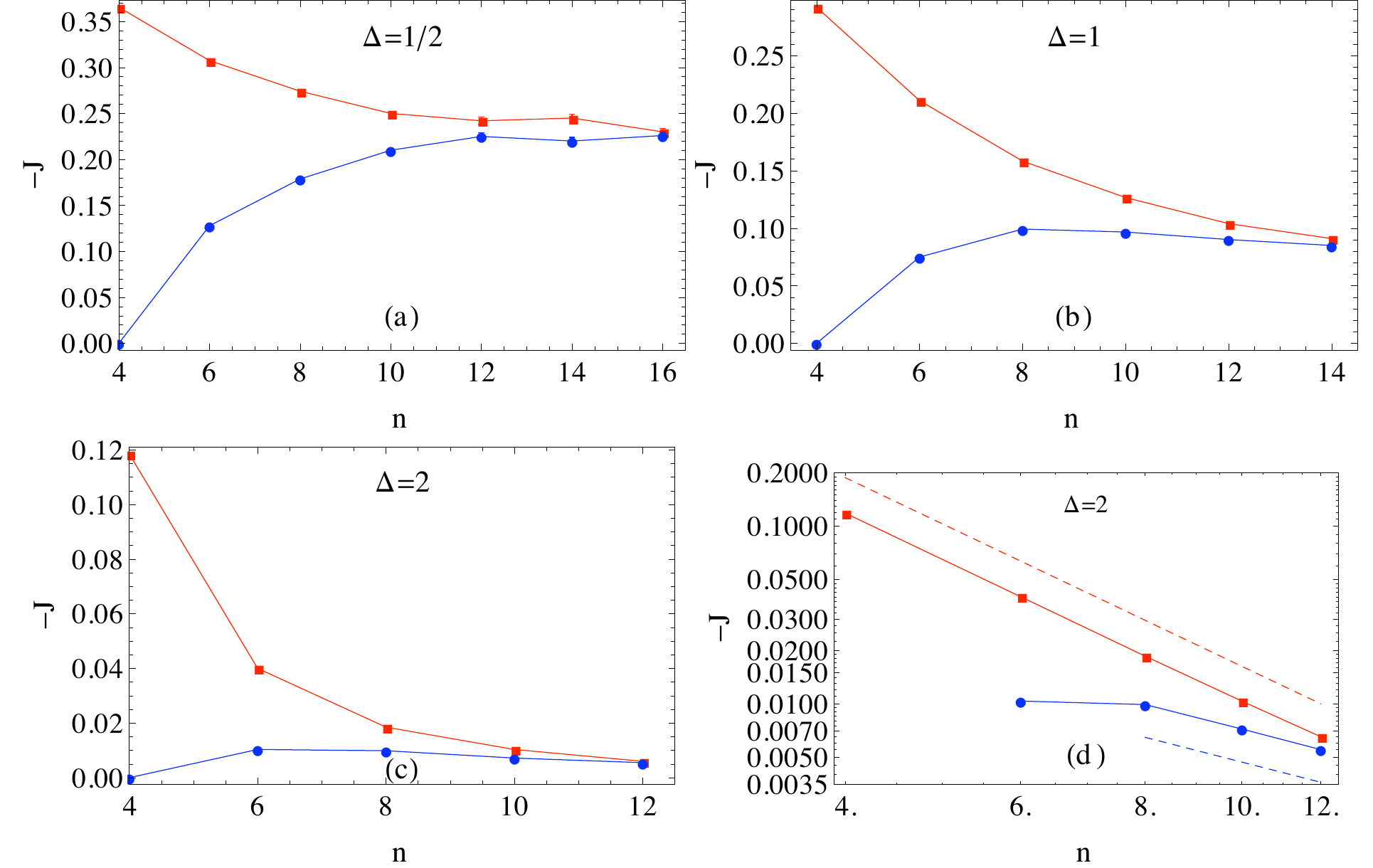}
\end{center}
\caption{We plot negative current $-J$ against spin chain length $n$ for $\Delta=1/2$ (a), $\Delta=1$ (b), and $\Delta=2$ (c), and for NESSs from both symmetry subspaces,
$\{+1,+1\}$ (red/squares) and $\{-1,-1\}$ (blue/circles). Data for $n\le 10$ are numerically exact, while for $n\ge 12$ the statistical/Trotter error is always smaller than the size of the symbols.
In panel (d) we plot the case $\Delta=2$ in the log-log scale, suggesting the fast decay of $J(n)$. The two eye-guiding dashed lines have slopes $-2.68$ (red) and $-1.44$ (blue).}
\label{fig:cur}
\end{figure}

We consider three characteristic values of the anisotropy parameter: $\Delta=1/2$, $\Delta=1$, and $\Delta=2$, where the linear-response-transport in the grand-canonical ensemble (or in an open chain without the microcanonical constraint) has been found to be, respectively, ballistic \cite{ProsenPRL}, anomalous (super-diffusive/sub-ballistic) \cite{markoPRL}, and diffusive \cite{Fabian,RobinPRE}.
On the other hand, linear-response transport in the micro-canonical ensemble in the latter case ($\Delta=2$) has been suggested to be sub-diffusive (insulating) \cite{Prelovsek}.
We chose the bath-coupling parameters as $\Gamma=1$, $\mu=0.2$, which put our setup in a near-linear-response regime, however, we also tried other values of bath parameters and the results did not change qualitatively. By considering a large driving parameter close to the maximum $\mu\approx 1$, we have found {\em negative differential conductance} for both symmetry subspaces (for $n=8$), 
consistent with the behavior found in the unconstrained model \cite{itaslo,itaslo2}.

The spin chain with four sites is particularly interesting as the subspace $\{-1,-1\}$ is one-dimensional, i.e, the NESS in this subspace has to be pure state, i.e. a {\em dark state}.
 Being a pure state it cannot support current ($J=0$) and has a vanishing magnetization on all 4 sites.
 Explicitly, the dark state then is of the form
\begin{equation}
\ket{\psi_\infty}=\frac{1}{\sqrt{2}} \left ( \ket{0110}-\ket{1001} \right )
\end{equation}
and does not depend on any system's parameters, $\Delta,\Gamma,\mu$.
We did not find any dark states of our flow (\ref{lind},\ref{XXZlindstrong}) for $n \ge 6$.

\begin{figure}[!]
\begin{center}
\includegraphics[width=1.03\textwidth]{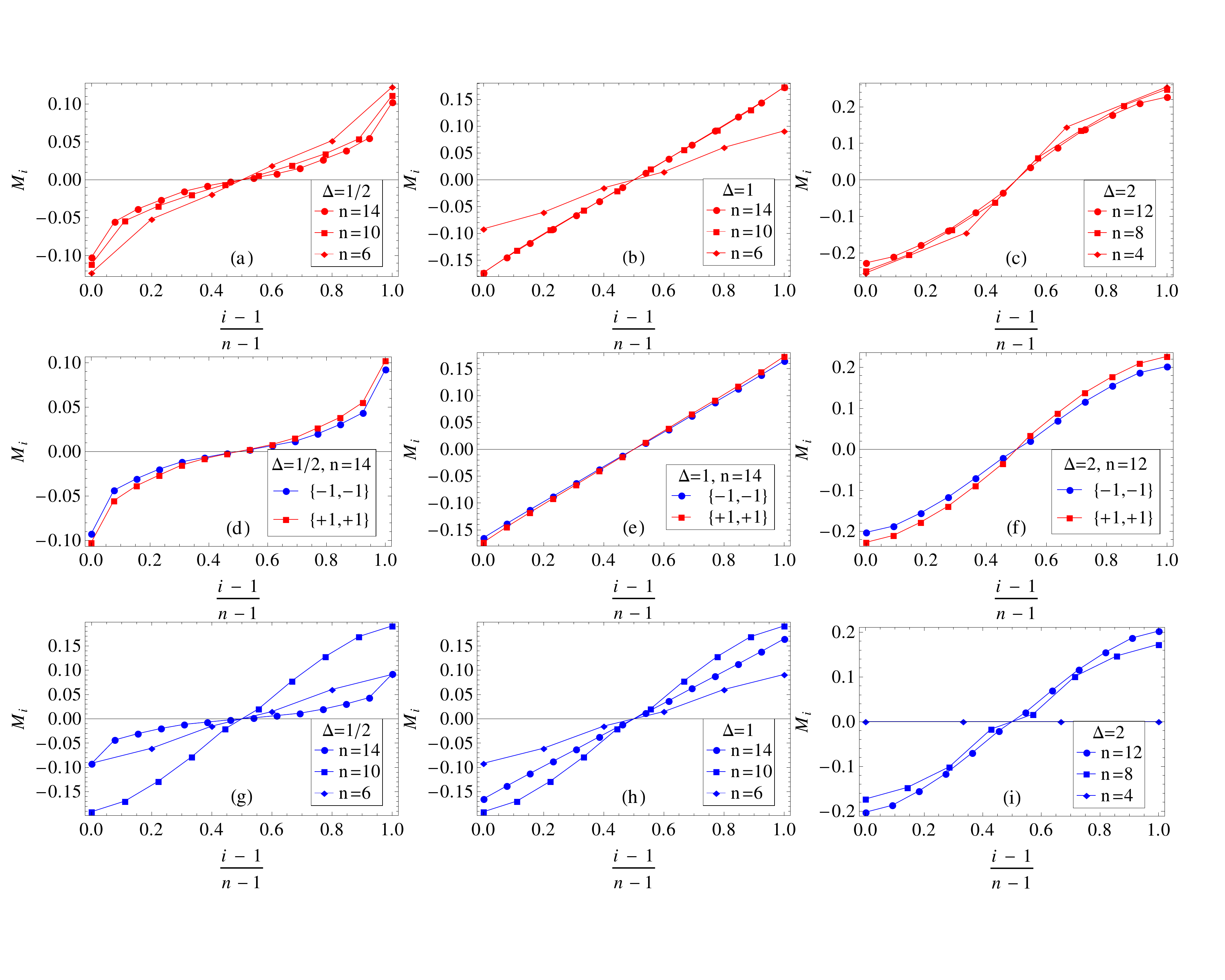}
\end{center}
\vspace{-10mm}
\caption{The magnetization profiles for various values of $\Delta=1/2,1,2$ in the $\{+1,+1\}$ subspace, (a,b,c), respectively, and comparing different chain sizes (see the legends inside plots), and similar for the 
$\{-1,-1\}$ subspace in the lower row of plots (g,h,i). In the midle row of plots (d,e,f) we compare, again for $\Delta=1/2,1,2$, and for the largest system size attainable, the magnetization profiles from the two symmetry sectors (red/blue).}
\label{fig:mag}
\end{figure}

In Fig.~\ref{fig:cur} we display the behavior of the steady state current $J$ versus the chain length $n$ for the two distinct steady states $\rho^{(1)}_\infty$ and $\rho^2_{\infty}$ in the spaces
$\{+1,+1\}$ and $\{-1,-1\}$. We find a general trend of convergence of the two curves which suggest that the current might not depend on the symmetry sector in the thermodynamic limit and would therefore be unique.
The same applies to magnetization profiles displayed in Fig.~\ref{fig:mag}. This implies a kind of {\em ergodicity} and suggests physical irrelevance of the symmetry $S$ in the thermodynamic limit.
Furthermore, in the regime $\Delta=1/2$ the current $J(n)$ tends to converge to a constant suggesting a ballistic behavior, for $\Delta=1$ $J(n)$ seems to slowly (perhaps slightly super-diffusively) decrease with $n$, while for
$\Delta=2$, $J(n)$ decreases fast. As far as our numerics can suggest it seems that in the regime $\Delta=2$, the decrease of $J(n)$ is definitely much faster than $1/n$ which is compatible with the suggested insulating behavior \cite{Prelovsek}. In Fig.~\ref{fig:mag} we make a detailed analysis of the magnetization profiles in both symmetry sectors, and for different system sizes. We find, consistently, that the profiles in the regimes $\Delta=1/2$, $\Delta=1$, and $\Delta=2$, display ballistic, (slightly super-)diffusive, and sub-diffusive (insulating) behavior, respectively. In is perhaps interesting to remark that in the isotropic case $\Delta=1$, for the largest system size that we could simulate, the profile seems to be almost perfectly linear which might also be compatible with thermodynamically normal diffusive behavior in this regime (perhaps related to \cite{RobinPRL}).

\begin{figure}[!]
\begin{center}
\leavevmode
\includegraphics[width=\textwidth]{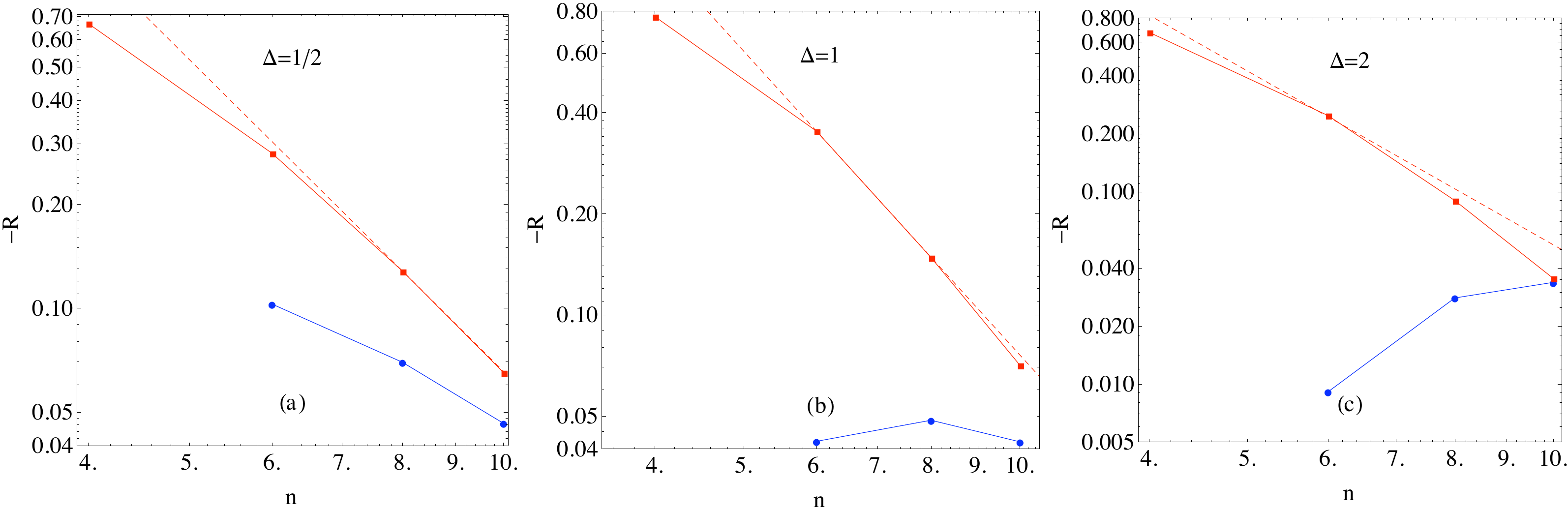}
\end{center}
\caption{
The scaling of the spectal gap $R$ with the system size $n$ in the log-log scale, for the three values of anisotropy $\Delta=1/2$ (a), $\Delta=1$ (b), and $\Delta=2$ (c), and for the two symmetry subspaces,
$\{+1,+1\}$ (red/squares) and $\{-1,-1\}$ (blue/circles). The dashed lines indicate the slope $-3$.}
\label{fig:spec}
\end{figure}

In Fig.~\ref{fig:spec} we focus on the dependence of the relaxation rates of $\rho^\alpha(t)$ to the respective steady state $\rho^\alpha_\infty$ on the chain length $n$. 
The relaxation rate is determined as the Liouvillean spectral gap $R = \min_{\lambda_j \ne 0} (-{\rm Re}\,\lambda_j)$ where $\lambda_j$ denote the eigenvalues of the Liouvillian superoperator $\LL$ in the respective symmetry subspace ($\{+1,+1\}$ or $\{-1,-1\}$). We find consistently that in the $\{+1,+1\}$ sector the spectral gap always quickly decays with $n$, perhaps faster than $n^{-3}$, which is the gap-scaling derived analytically or numerically suggested for some integrable spin chains with boundary driving \cite{NJP:08,markoPRL}. On the other hand, we are unable to conclude anything meaningful on the scaling of the gap in the $\{-1,-1\}$ symmetry sector, where the behavior of $R(n)$ may not even be monotonic.
To complement the spectral information we plot in Fig.~\ref{fig:fullspec} the set of all Liouvillian eigenvalues $\lambda_j$ which lie sufficiently close to the imaginary line (i.e. with sufficiently small damping rates
$-{\rm Re}\lambda_j$). Further, we considered the density distribution of Liouvillian spectrum projected onto the real line, in fact a cumulative distribution
$W(r) = \sum_{\lambda_j} \theta(r - {\rm Re}\,\lambda_j)/\sum_{\lambda_j}1$ giving the probability that a randomly picked damping rate is larger than $-r$. Here $\theta(r)$ designates a Heaviside step function.
The plot of $W(r)$ for the two symmetry sectors shown in Fig.~\ref{fig:distspec} in the easy-axis regime $\Delta=2$ suggests existence of two pseudo-gaps, particularly clearly for the $\{-1,-1\}$ symmetry sector.

\begin{figure}[!]
\begin{center}
\leavevmode
\includegraphics[width=\textwidth]{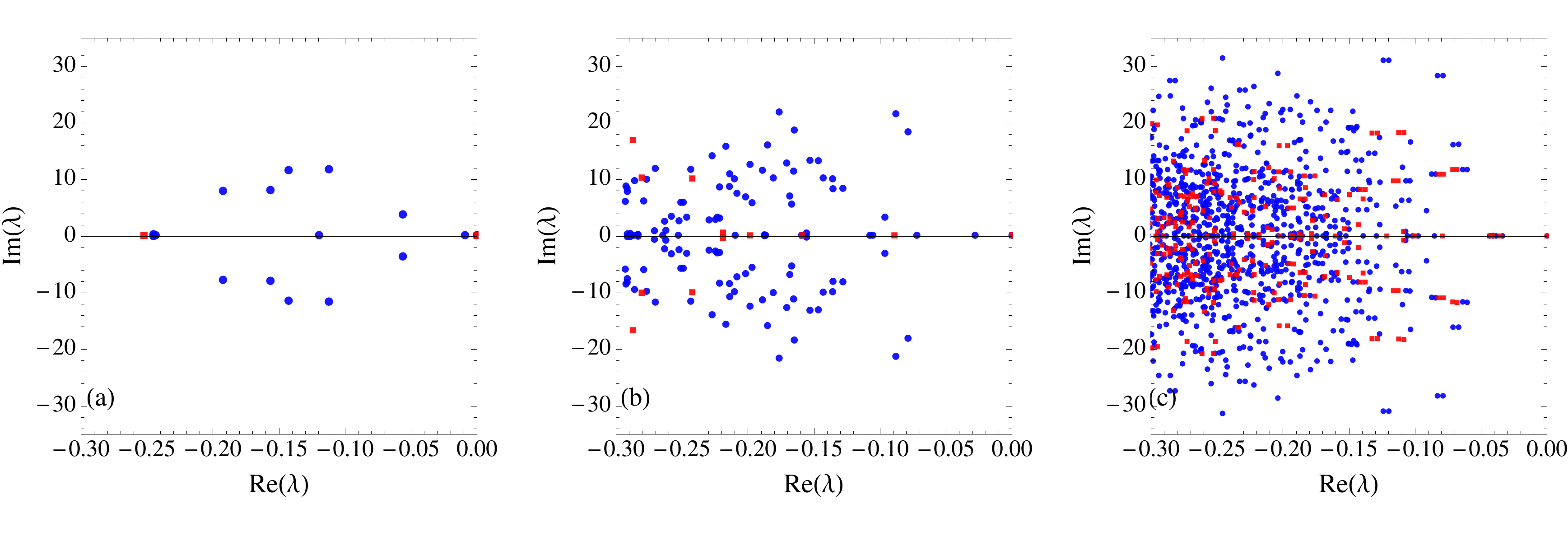}
\end{center}
\caption{The full Liouvillian spectra $\{\lambda_j\}$ for the case of $\Delta=2$ and for three different chain sizes $n=6$ (a), $n=8$ (b), and $n=10$ (c), and for the two symmetry subspaces,
$\{+1,+1\}$ (red/squares) and $\{-1,-1\}$ (blue/circles). We plot just a small part of the complex plane close to the imaginary line (i.e. plotting just the smallest rates corresponding to longest-lived modes).}
\label{fig:fullspec}
\end{figure}

\begin{figure}[!]
\begin{center}
\leavevmode
\includegraphics[width=0.5\textwidth]{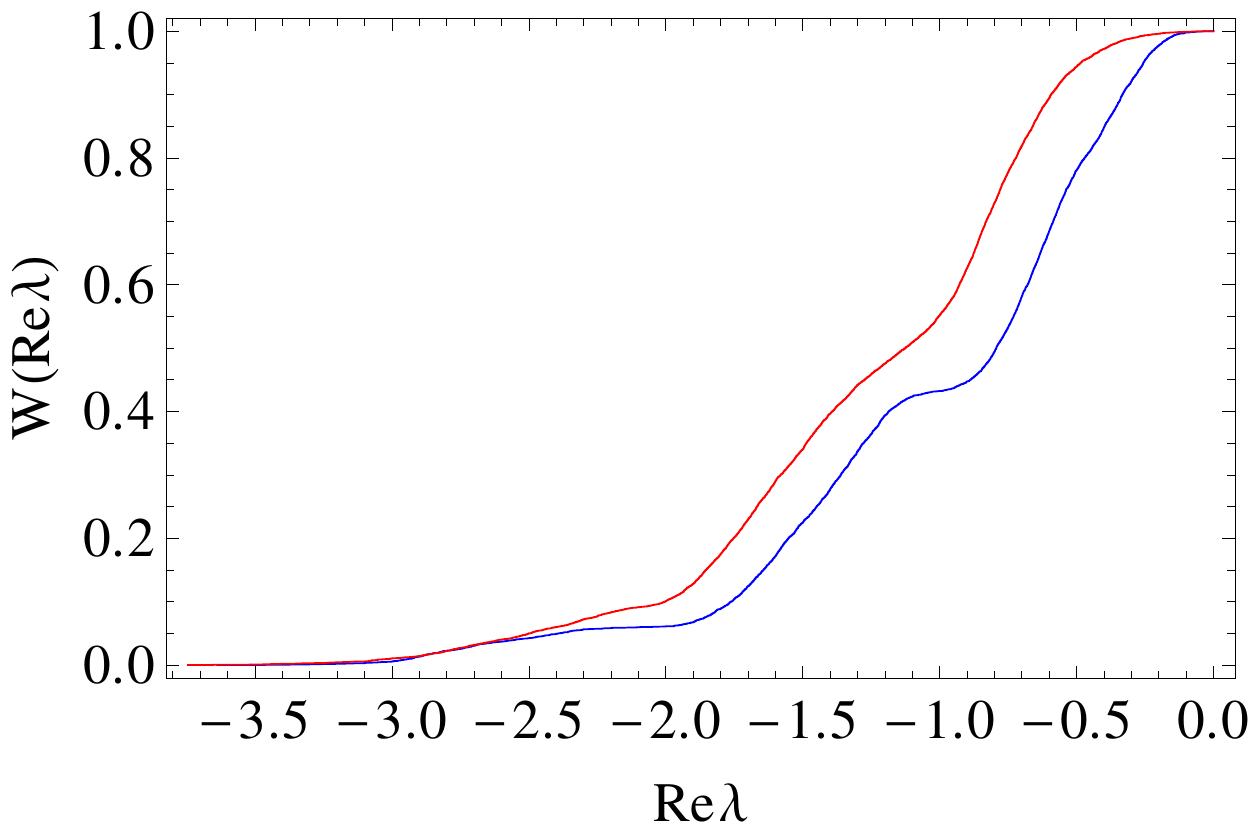}
\end{center}
\caption{The cumulative distribution $W(r)$ of the the case $\Delta=2$, $n=10$, i.e. the probability that the randomly chosen decay rate $-{\rm Re}\,\lambda_j$ is larger than $-r$,
for the two symmetry subspaces, $\{+1,+1\}$ (red) and $\{-1,-1\}$ (blue). Note the occurrence of two pseudo-gaps (strips in complex plane with very few -- rarely distributed decay modes), 
particularly visible in the data for  $\{-1,-1\}$.}
\label{fig:distspec}
\end{figure}

\section{Discussion and conclusion}

We have presented a small but potentially very useful observation on the symmetry reduction of the Linbdblad master equations describing Markovian open quantum systems.
We classified quantum Liouvillian symmetries as {\em weak} or {\em strong}, depending, respectively, on whether only the generator of master equation as a whole commutes with the symmetry operation, or whether
each of the Lindblad (jump) operators, and the Hamiltonian, commute with the symmetry individually. We have shown that only existence of a strong symmetry implies nontrivial symmetry reductions inside the space of steady states (fixed points of the Markovian semi-group). We have also provided a non-trivial example of our construction in the context of quantum transport problem for the strongly interacting XXZ spin 1/2 chain.
Namely defining an open XXZ chain with a micro-canonical constraint, we have shown that it exhibits a strong parity-type symmetry and outlined numerical computation of distinct steady states and the corresponding physical observables. Our numerical results seem to suggest that microcanonically constrained open XXZ chain exhibits a sub-diffusive (thermodynamically insulating) behavior in the gapped Ising-like (easy axis) phase.

We note that both, weak and strong symmetries can provide practical advantage in description of quantum Liouvillian flows, as they both allow to block-diagonalize and hence reduce the dimension of the Liouvillian.
One can also have a combination of both, namely independent generators of weak and strong symmetries, or non-abelian symmetries of either type. In particular, the study of non-abelian Liouvillian symmetries should be an important line of future research.

\section*{Acknowledgements}
TP acknowledges useful discussions with Wojciech de Roeck and Herbert Spohn. The work has been financially supported by the grants P1-0044 and J1-2208 of the Slovenian research agency (ARRS).

\section{Appendix: Symmetry reductions of quantum Liouvillian dynamics}

\label{decomposition}

Let us consider two cases of {\em symmetric} Lindblad master equations with two different kinds of dissipators of the flow (\ref{lind}):
\begin{enumerate}
\item 
For the first kind, which we will call a {\em weak symmetry}, we assume that there exist a unitary operator $S$ over ${\cal H}$, such that 
\begin{equation}
\LL (S x S^\dagger) = S (\LL x) S^\dagger
\label{weak}
\end{equation}
for any $x\in {\cal B}({\cal H})$.
\item 
For the second kind, which we will call  a {\em strong symmetry}, we assume that a unitary operator $S$ over ${\cal H}$ commutes with each member of the set $\{H,L_m,m=1,2,\ldots\}$ individually
\begin{equation}
[S,H] = 0, \quad [S,L_m] = 0,\; m=1,2,\ldots
\label{strong}
\end{equation}
\end{enumerate}
Clearly (ii)  implies (i), but not vice versa.

We continue by fixing some notation. Let $s_\alpha = e^{\ii \theta_\alpha}$, $\alpha=1,2\ldots n_S$ denote distinct eigenvalues of $S$, and let ${\cal H}_\alpha$ be the corresponding mutually orthogonal eigenspaces, so that we have a complete symmetry decomposition of the Hilbert space
\begin{equation}
{\cal H} = \bigoplus_{\alpha=1}^{n_S} {\cal H}_\alpha .
\end{equation}
The adjoint representation of $S$ on Hilbert space ${\cal B}({\cal H})$ shall be denoted as ${\hat S}$,
\begin{equation}
\hat{S}(x) = S x S^\dagger .
\end{equation}
The spectrum of a super-operator ${\hat S}$ is formed of all possible products $s_\alpha \bar{s}_\beta = e^{\ii (\theta_\alpha-\theta_\beta)}$,
since  ${\hat S}(\ket{\psi}\bra{\phi}) = s_\alpha \bar{s}_\beta \ket{\psi}\bra{\phi}$ for any $\ket{\psi}\in{\cal H}_\alpha$,
$\ket{\phi}\in{\cal H}_\beta$. Let $s'_\nu$, $\nu=1,2,\ldots n_{\hat S}$ denote the distinct eigenvalues of 
$\hat{S}$, denoted such that $s'_1 = 1$, and $s'_\nu \neq 1$ for all $\nu \neq 1$. Note that the following bounds always hold
\begin{equation}
n_{S} \le n_{\hat S} \le n_S (n_S - 1).
\end{equation}
The lower bound $n_{\hat S} = n_S$ is reached when $\{ s_\alpha \}$ are roots of unity, i.e. in the case of $\ZZ_{n_S}$ symmetry obeying $S^{n_S} = 1$. The upper bound 
$n_{\hat S} = n_S (n_S - 1)$ is reached when there is no arithmetic structure in the spectrum $\{ s_\alpha \}$ which may happen when $S$ generates a continuous group of symmetry transformations.
Let 
\begin{equation}
{\cal B}({\cal H}) = \bigoplus_{\nu=1}^{n_{\hat S}} {\cal B}_{\nu}
\label{weakdecomp}
\end{equation}
be the corresponding symmetry decomposition of the operator space
${\cal B}({\cal H})$, where the $\hat{S}$-eigenspaces ${\cal B}_{\nu}$ are mutually orthogonal in the Hilbert-Schmidt sense.

Clearly, having a weak symmetry (\ref{weak}) is equivalent to having commuting superoperators
\begin{equation}
\LL {\hat S} = {\hat S} \LL
\end{equation}
which means that the Liouvillian is block diagonalized with respect to the decomposition (\ref{weakdecomp}), namely
\begin{equation}
\LL {\cal B}_\nu \subseteq {\cal B}_\nu,
\label{Bnu}
\end{equation}
i.e. all spaces ${\cal B}_\nu$ are invariant w.r.t. the flow (\ref{lind}).
Note that only ${\cal B}_1$ contains operators with non-vanishing trace, thus it should contain the fixed point $\rho_\infty \in {\cal B}_1$ due to the trace preservation.
We remark that under the conditions of the Evans theorem \cite{Evans} NESS can be unique despite the existence of a weak symmetry, as shall be the case in the example outlined in the next section.

Let us now assume the existence of a strong symmetry. In this case we can prove the following useful result:

\begin{theorem}
\label{theorem}
Existence of a unitary operartor $S$ with the property (\ref{strong}) implies:
\begin{enumerate} 
\item Liouvillian $\LL$ can be block-decomposed into $n_S^2$ invariant subspaces
$$\LL {\cal B}_{\alpha,\beta} \subseteq {\cal B}_{\alpha,\beta},$$ where
${\cal B}_{\alpha,\beta} = \{ \ket{\psi}\bra{\phi}; \ket{\psi} \in {\cal H}_\alpha, \ket{\phi} \in {\cal H}_\beta \}$ for $\alpha,\beta=1,2,\ldots n_S$.
\item We have at least $n_S$ distinct fixed points (NESSs), which we can label by $s_\alpha$, namely each diagonal operator space
contains at least one fixed point, $$\rho_\infty^{\alpha} \in {\cal B}_{\alpha,\alpha}$$ for $\alpha=1,2,\ldots n_S$.
\end{enumerate}
\end{theorem}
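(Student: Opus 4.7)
The plan is to prove (i) by a direct invariance check at the level of the building blocks of $\LL$, and then to deduce (ii) by restricting to each diagonal sector and invoking the standard existence of a steady state for a finite-dimensional Lindblad semi-group.

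For (i), I would first observe that $[S,H]=0$ together with $S\ket{\psi}=s_\alpha \ket{\psi}$ for $\ket{\psi}\in{\cal H}_\alpha$ forces $S(H\ket{\psi})=H S\ket{\psi}=s_\alpha H\ket{\psi}$, so $H$ maps each ${\cal H}_\alpha$ into itself. By exactly the same argument each $L_m$, each $L_m^\dagger$, and each $L_m^\dagger L_m$ preserves every ${\cal H}_\alpha$. Now take any $x=\ket{\psi}\bra{\phi}$ with $\ket{\psi}\in{\cal H}_\alpha$, $\ket{\phi}\in{\cal H}_\beta$; then $Hx,\,xH,\,L_m x L_m^\dagger,\,L_m^\dagger L_m x$ and $x L_m^\dagger L_m$ are all again of the form $\ket{\psi'}\bra{\phi'}$ with $\ket{\psi'}\in{\cal H}_\alpha$ and $\ket{\phi'}\in{\cal H}_\beta$, hence lie in ${\cal B}_{\alpha,\beta}$. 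Summing these contributions with the coefficients in (\ref{lind}) and extending by linearity gives $\LL {\cal B}_{\alpha,\beta}\subseteq {\cal B}_{\alpha,\beta}$, which is precisely (i). As a sanity check, the total count $\sum_{\alpha,\beta} \dim {\cal B}_{\alpha,\beta} = \sum_{\alpha,\beta}(\dim{\cal H}_\alpha)(\dim{\cal H}_\beta) = N^2$ matches the dimension of ${\cal B}({\cal H})$, and also refines the coarser weak-symmetry decomposition (\ref{weakdecomp}), since ${\cal B}_\nu = \bigoplus_{s_\alpha \bar s_\beta = s'_\nu}{\cal B}_{\alpha,\beta}$.

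For (ii), fix $\alpha$ and consider the restriction $\LL_\alpha := \LL|_{{\cal B}_{\alpha,\alpha}}$, which is well defined by (i). Because $H$ and all $L_m$ leave ${\cal H}_\alpha$ invariant, the restricted operators $H_\alpha := H|_{{\cal H}_\alpha}$ and $L_{m,\alpha}:= L_m|_{{\cal H}_\alpha}$ satisfy exactly the algebraic structure of a Lindblad generator on the finite-dimensional Hilbert space ${\cal H}_\alpha$, and $\LL_\alpha$ coincides with this generator acting on operators over ${\cal H}_\alpha$. In particular $\LL_\alpha$ generates a completely positive, trace-preserving semi-group on the (non-empty, compact, convex) set of density matrices supported in ${\cal H}_\alpha$. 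Any such semi-group on a finite-dimensional space has at least one fixed point (for instance by applying the Markov--Kakutani or Brouwer fixed-point theorem to a single element of the semi-group, or equivalently by noting that the adjoint map has $1$ in its spectrum so the map itself does too). Any such fixed point $\rho_\infty^\alpha$ lies in ${\cal B}_{\alpha,\alpha}$ and satisfies $\LL\rho_\infty^\alpha = \LL_\alpha \rho_\infty^\alpha = 0$, proving (ii). Since the sectors ${\cal B}_{\alpha,\alpha}$ are mutually orthogonal in the Hilbert--Schmidt sense, the $n_S$ fixed points obtained this way are manifestly distinct.

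The main obstacle, I expect, is purely a matter of cleanliness rather than depth: one has to be careful that the restricted $\LL_\alpha$ is genuinely a bona fide Lindblad generator on ${\cal H}_\alpha$ (in particular trace-preserving there), so that the existence of a NESS can be invoked inside each sector. Once the invariance statement in (i) is established, this restriction argument goes through with no further difficulty, and the rest is just bookkeeping.
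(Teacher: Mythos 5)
Your proof is correct, and it establishes both claims; the differences from the paper's own argument are in mechanics rather than substance, but they are worth noting. For part (i) the paper does not check the building blocks of $\LL$ term by term as you do: it introduces the left- and right-multiplication superoperators $\hat{S}_{\rm L}(x)=Sx$ and $\hat{S}_{\rm R}(x)=xS^\dagger$, observes that the ${\cal B}_{\alpha,\beta}$ are exactly the joint eigenspaces of this commuting pair, and deduces invariance from $[\LL,\hat{S}_{\rm L}]=[\LL,\hat{S}_{\rm R}]=0$; your element-wise verification that $H$, $L_m$, $L_m^\dagger$ all preserve each ${\cal H}_\alpha$ (and hence that $Hx$, $xH$, $L_m x L_m^\dagger$, etc.\ stay in ${\cal B}_{\alpha,\beta}$) proves the same fact more concretely, at the cost of a longer check --- just keep in mind that ${\cal B}_{\alpha,\beta}$ as written is a set of rank-one operators, so your ``extending by linearity'' to its span is the right (and necessary) reading. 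For part (ii) the paper argues dynamically: it notes that only the diagonal blocks contain operators of nonzero trace, starts the flow from $\rho^\alpha(0)=P_\alpha/\tr P_\alpha$, and asserts that the flow ``yields a fixed point'' in ${\cal B}_{\alpha,\alpha}$. Your route --- showing that $\LL|_{{\cal B}_{\alpha,\alpha}}$ is itself a bona fide Lindblad generator built from $H|_{{\cal H}_\alpha}$ and $L_m|_{{\cal H}_\alpha}$ (which requires exactly the observation you flag, that $L_m^\dagger$ also preserves ${\cal H}_\alpha$, so that $(L_m^\dagger L_m)|_{{\cal H}_\alpha}=(L_{m,\alpha})^\dagger L_{m,\alpha}$) and then invoking Markov--Kakutani or Brouwer on the compact convex set of states supported in ${\cal H}_\alpha$ --- is actually the more careful of the two, since the paper's phrasing tacitly assumes the long-time limit exists, whereas a fixed-point theorem sidesteps that issue entirely. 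One small caution: your parenthetical alternative (``the adjoint map has $1$ in its spectrum so the map itself does too'') gives an eigenvector of eigenvalue $1$ but not automatically a \emph{positive} one, so the compactness/convexity argument is the one to lean on. Your final remark that Hilbert--Schmidt orthogonality of the diagonal sectors makes the $n_S$ fixed points distinct matches the paper's conclusion.
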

\begin{proof}
Let us write the super-operators of left and right multiplications with the symmetry operator $\hat{S}_{\rm L}(x) \equiv S x$, $\hat{S}_{\rm R}(x) \equiv x S^\dagger$, for any $x \in {\cal B}({\cal H})$.
Since, clearly, $[\hat{S}_{\rm L},\hat{S}_{\rm R}] = 0$, the spaces ${\cal B}_{\alpha,\beta}$ can be considered as the joint eigenspaces of both $\hat{S}_{\rm L}$ and $\hat{S}_{\rm R}$. Namely $\hat{S}_{\rm L}(\ket{\psi}\bra{\phi}) = s_\alpha \ket{\psi}\bra{\phi}$,   $\hat{S}_{\rm R}(\ket{\psi}\bra{\phi}) = \bar{s}_\beta \ket{\psi}\bra{\phi}$, for any 
$\ket{\psi}\in {\cal H}_\alpha$, $\ket{\phi}\in {\cal H}_\beta$.
Eq. (\ref{strong}) implies commutativity $[\LL,\hat{S}_{\rm L}] = 0$, $[\LL,\hat{S}_{\rm R}] = 0$, implying that any ${\cal B}_{\alpha,\beta}$ is also invariant subspace of $\LL$ proving point (i) of the theorem.

To prove (ii) we note again that operators with non-vanishing trace can only be contained in diagonal spaces ${\cal B}_{\alpha,\alpha}$, due to mutual orthogonality of ${\cal H}_\alpha$.
Hence starting from some $\rho^\alpha(0) \in {\cal B}_{\alpha,\alpha}$ with $\tr \rho^\alpha(0) = 1$ (say $\rho(0) = P_\alpha /\tr P_\alpha$ where $P_\alpha$ is orthogonal projector to the subspace ${\cal H}_\alpha$),
the flow (\ref{lind}) yields a fixed point $\rho^\alpha_\infty$ in ${\cal B}_{\alpha,\alpha}$ for any $\alpha=1,\ldots,n_S$. Again according to Evans theorem \cite{Evans} this fixed point can be unique, or further degenerate. In any case, we have at least $n_S$ distinct steady states, which may be labelled by the symmetry eigenvalue $s_\alpha$. QED
\end{proof}

Clearly, the direct sum of all invariant subspaces give the entire operator space
\begin{equation}
{\cal B}({\cal H}) = \bigoplus_{\alpha=1}^{n_S}\bigoplus_{\beta=1}^{n_S} {\cal B}_{\alpha,\beta},
\label{decomp1}
\end{equation}
and the quotient invariant spaces (\ref{weakdecomp}) are given by partial direct sums
\begin{equation}
{\cal B}_\nu = \bigoplus_{\alpha,\beta}^{s_\alpha \bar{s}_\beta = s'_\nu} {\cal B}_{\alpha,\beta} .
\label{decomp2}
\end{equation}

We remark that (non-positive) initial conditions $\rho(0)$ supported in the non-diagonal invariant spaces ${\cal B}_{\alpha,\beta}$, $\alpha\neq \beta$, which have vanishing trace, could in principle provide extra degeneracy of the null space of $\LL$, i.e. an extra dimension of the convex set of NESSs, or simply complete the description of the relaxation process for a general initial condition $\rho(0)$. In other words, we do not see any argument which would forbid the spectra of  block-operators $\LL|_{{\cal B}_{\alpha,\beta}}$ to contain $0$ even for $\alpha \ne \beta$, although this should be -- if possible at all -- an exceptional (non-generic) situation.

\section*{References}

\end{document}